\DeclareSymbolFontAlphabet{\amsmathbb}{AMSb}
\newcommand\munderbar[1]{%
  \underaccent{\bar}{#1}}
\crefname{equation}{Eq.}{Eqs.}
\crefname{pluralequation}{Eqs.}{Eqs.}
\crefname{algorithm}{Algorithm}{Algorithm}
\crefname{figure}{Fig.}{Figs.}
\crefname{pluralfigure}{Figs.}{Figs.}
\crefname{section}{Sect.}{Sects.}
\crefname{pluralsection}{Sects.}{Sects.}
\crefname{table}{Table}{Table}
\crefname{pluraltable}{Tables}{Tables}
\crefname{definition}{Def.}{Def.}
\crefname{pluraldefinition}{Defs.}{Defs.}
\crefname{theorem}{Theorem}{Theorems}
\crefname{pluraltheorem}{Theorems}{Theorems}
\crefname{lemma}{Lemma}{Lemmas}
\crefname{plurallemma}{Lemmas}{Lemmas}
\crefname{example}{Example}{Example}
\crefname{pluralexample}{Examples}{Examples}
\crefname{problem}{Problem}{Problem}
\crefname{pluralproblem}{Problems}{Problems}
\crefname{assumption}{Assumption}{Assumption}
\crefname{pluralassumption}{Assumptions}{Assumptions}
\crefname{remark}{Remark}{Remark}
\crefname{pluralremark}{Remarks}{Remarks}
\crefname{proposition}{Proposition}{Proposition}
\crefname{pluralproposition}{Propositions}{Propositions}
\crefname{appendix}{Appendix}{Appendices}
\crefname{pluralappendix}{Appendices}{Appendices}
\pgfplotsset{compat=newest}
\definecolor{red}{rgb}{0.745,0.192,0.102}
\definecolor{darkgreen}{RGB}{34,161,55}
\definecolor{ruhuisstijlrood}{rgb}{0.745,0.192,0.102}
\definecolor{ruhuisstijlzwart}{rgb}{0,0,0}
\definecolor{ruhuisstijlwit}{rgb}{0.98,0.98,0.98}
\definecolor{plotblue}{rgb}{0.1,0.498039215686275,0.9549019607843137}
\newcolumntype{H}{>{\setbox0=\hbox\bgroup}c<{\egroup}@{}}
\definecolor{LightCyan}{rgb}{0.88,1,1}
\definecolor{red}{rgb}{0.745,0.192,0.102}
\definecolor{darkgreen}{RGB}{34,161,55}
\definecolor{ruhuisstijlrood}{rgb}{0.745,0.192,0.102}
\definecolor{ruhuisstijlzwart}{rgb}{0,0,0}
\definecolor{ruhuisstijlwit}{rgb}{0.98,0.98,0.98}
\newtheorem{example}{Example}
\newtheorem{theorem}{Theorem}
\newtheorem{definition}{Definition}
\newtheorem{lemma}{Lemma}
\newtheorem{remark}{Remark}
\newcommand*{\policy}{\ensuremath{\mu}}
\newcommand*{\States}{\ensuremath{\mathbb{X}}}
\newcommand*{\Actions}{\ensuremath{\mathbb{U}}}
\newcommand*{\Disturb}{\ensuremath{\mathbb{V}}}
\newcommand*{\Labels}{\ensuremath{\mathbb{L}}}
\newcommand*{\initState}{\ensuremath{\bar{x}}}
\newcommand*{\transfunc}{\ensuremath{\mathbb{T}}}
\newcommand*{\transfuncR}{\ensuremath{\mathbb{T}^R}}
\newcommand*{\labelfunc}{\ensuremath{h}}
\newcommand*{\mdp}{\ensuremath{\mathcal{D}}}
\newcommand*{\rmdp}{{\ensuremath{\mathcal{M}}}}
\newcommand*{\MDP}{\ensuremath{\tuple{\States,\initState,\Actions,\transfunc,\Labels,\labelfunc}}}
\newcommand*{\MDPi}[1]{\ensuremath{\tuple{\States_{#1},{\initState}_{#1},\Actions_{#1},\transfunc_{#1},\Labels,\labelfunc_{#1}}}}
\newcommand*{\RMDP}{\ensuremath{\tuple{\States,\initState,\Actions,\Disturb,\transfuncR,\Labels,\labelfunc}}}
\newcommand*{\RMDPi}[1]{\ensuremath{\tuple{\States_{#1},{\initState}_{#1},\Actions_{#1},\Disturb_{#1},\transfuncR_{#1},\Labels,\labelfunc_{#1}}}}
\newcommand*{\adversary}{\ensuremath{\tau}}
\newcommand*{\NN}{\mathbb{N}}
\newcommand*{\RR}{\mathbb{R}}
\newcommand*{\tuple}[1]{\left(#1\right)}
\newcommand*{\support}[1]{\textsf{support}(#1)}
\newcommand*{\inv}{\ensuremath{{-1}}}
\newcommand*{\Relation}{\mathcal{R}}
\newcommand*{\liftRelation}{\mathcal{R}^\mathcal{P}}
\newcommand*{\Relinv}[1]{\Relation^\inv({#1})}
\newcommand*{\Rel}[1]{\Relation({#1})}
\newcommand*{\Prob}{\mathbb{P}}
\newcommand*{\distr}[1]{\mathcal{P}(#1)}
\newcommand*{\Borel}[1]{\ensuremath{\mathcal{B}(#1)}}
\newcommand*{\Gauss}[2]{\ensuremath{\mathcal{N}(#1 , #2)}}
\DeclareMathOperator*{\argmin}{argmin}
\newacronym[]{LTL}{LTL}{linear temporal logic}
\newacronym[]{iid}{i.i.d.}{independent and identically distributed}
\newacronym[]{UAV}{UAV}{unmanned aerial vehicle}
\newacronym[]{PAC}{PAC}{probably approximately correct}
\newacronym[]{DRO}{DRO}{distributionally robust optimization}
\newacronym[plural=MCs,firstplural=Markov chains (MCs)]{MC}{MC}{Markov chain}
\newacronym[plural=DTMCs,firstplural=discrete-time Markov chains (DTMCs)]{DTMC}{DTMC}{discrete-time Markov chain}
\newacronym[plural=MDPs,firstplural=Markov decision processes (MDPs)]{MDP}{MDP}{Markov decision process}
\newacronym[plural=pMCs,firstplural=parametric MCs (pMCs)]{pMC}{pMC}{parametric MC}
\newacronym[plural=rMCs,firstplural=robust MCs (rMCs)]{rMC}{rMC}{robust MC}
\newacronym[plural=prMCs,firstplural=parametric robust MCs (prMCs)]{prMC}{prMC}{parametric robust MC}
\newacronym[plural=prMDPs,firstplural=parametric robust MDPs (prMDPs)]{prMDP}{prMDP}{parametric robust MDP}
\newacronym[plural=iMDPs,firstplural=interval Markov decision processes (iMDPs)]{iMDP}{iMDP}{interval Markov decision process}
\newacronym[plural=aMDPs,firstplural=augmented MDPs (aMDPs)]{aMDP}{aMDP}{augmented MDP}
\newacronym[plural=POMDPs,firstplural=partially observable Markov decision processes (POMDPs)]{POMDP}{POMDP}{partially observable Markov decision process}
\definecolor{color1}{RGB}{55,126,184} %
\definecolor{color2}{RGB}{228,26,28} %
\definecolor{color3}{RGB}{77,175,74} %
\definecolor{color4}{RGB}{152,78,163} %
\definecolor{color5}{RGB}{255,127,0} %
\definecolor{color6}{rgb}{0.5, 1.0, 0.83} %
\definecolor{color7}{rgb}{1.0, 0.0, 1.0} %
\definecolor{color8}{rgb}{0.66, 0.66, 0.66} %
\newcommand{\scatterplotstorm}[6]{%
	\begin{tikzpicture}
	\begin{axis}[
	width=\scatterplotsize,
	height=\scatterplotsize,
	axis equal image,
	xmin=0.01,
	ymin=0.01,
	ymax=22000,
	xmax=22000,
	xmode=log,
	ymode=log,
	axis x line=bottom,
	axis y line=left,
	xtick={0.01,0.1,1,5,20,100,1000,3000},
	xticklabels={0.01,0.1,1,5,20,100,1000,3000},
	extra x ticks = {10000},
	extra x tick labels = {Timeout},
	extra x tick style = {grid = major},
	ytick={0.01,0.1,1,5,20,100,1000,3000},
	yticklabels={0.01,0.1,1,5,20,100,1000,3000},
	extra y ticks = {10000},
	extra y tick labels = {Timeout},
	extra y tick style = {grid = major},
	xlabel={#3},
	xlabel style={font=\small,yshift=18pt,xshift=-14pt},
	ylabel={#5},
	ylabel style={font=\small,yshift=-0.55cm},
	yticklabel style={font=\tiny},
	xticklabel style={rotate=290,anchor=west,font=\tiny},
	legend pos=north west,
	legend columns=-1,
	legend style={at={(0.4,0.15)},nodes={scale=0.75, transform shape},inner sep=1.5pt},
        clip mode=individual,
	]
	
	\addplot[
	scatter,
	only marks,
	scatter/classes={
		pMC={mark=*,color1,mark size=1.5},
		prMC={mark=triangle*,color2,mark size=1.75}
	},
	scatter src=explicit symbolic
	]%
	table [col sep=semicolon,x=#2,y=#4,meta=Type] {#1};
	\ifthenelse{\NOT\equal{#6}{false}}{\legend{pMC, prMC}}{}
	\addplot[no marks] coordinates {(0.001,0.001) (10000,10000) };
	\addplot[no marks, densely dotted] coordinates {(0.001,0.01) (1000,10000)};
	\addplot[no marks, densely dotted] coordinates {(0.01,0.001) (10000,1000)};

        \draw [latex-] (axis cs:50,500)-- +(-6pt,5pt) node[left, xshift=5pt, yshift=5pt] {$10\times$ faster};
 
	\end{axis}
	\end{tikzpicture}
}
\newcommand{\scatterplotstormB}[6]{%
	\begin{tikzpicture}
	\begin{axis}[
	width=\scatterplotsize,
	height=\scatterplotsize,
	axis equal image,
	xmin=0.001,
	ymin=0.001,
	ymax=22000,
	xmax=22000,
	xmode=log,
	ymode=log,
	axis x line=bottom,
	axis y line=left,
	xtick={0.01,0.1,1,5,20,100,1000,3000},
	xticklabels={0.01,0.1,1,5,20,100,1000,3000},
	extra x ticks = {10000},
	extra x tick labels = {Timeout},
	extra x tick style = {grid = major},
	ytick={0.01,0.1,1,5,20,100,1000,3000},
	yticklabels={0.01,0.1,1,5,20,100,1000,3000},
	extra y ticks = {10000},
	extra y tick labels = {Timeout},
	extra y tick style = {grid = major},
	xlabel={#3},
	xlabel style={font=\small,yshift=18pt,xshift=-5pt},
	ylabel={#5},
	ylabel style={font=\small,yshift=-0.55cm,xshift=-0.2cm},
	yticklabel style={font=\tiny},
	xticklabel style={rotate=290,anchor=west,font=\tiny},
	legend pos=north west,
	legend columns=-1,
	legend style={at={(0.4,0.15)},nodes={scale=0.75, transform shape},inner sep=1.5pt},
        clip mode=individual,
	]
	
	\addplot[
	scatter,
	only marks,
	scatter/classes={
		pMC={mark=*,color1,mark size=1.5},
		prMC={mark=triangle*,color2,mark size=1.75}
	},
	scatter src=explicit symbolic
	]%
	table [col sep=semicolon,x=#2,y=#4,meta=Type] {#1};
	\ifthenelse{\NOT\equal{#6}{false}}{\legend{pMC, prMC}}{}
	\addplot[no marks] coordinates {(0.001,0.001) (10000,10000) };
	\addplot[no marks, densely dotted] coordinates {(0.001,0.01) (1000,10000)};
	\addplot[no marks, densely dotted] coordinates {(0.01,0.001) (10000,1000)};

        \draw [latex-] (axis cs:100,1000)-- +(-6pt,8pt) node[left] {$10\times$ faster};
 
	\end{axis}
	\end{tikzpicture}
}
\newif\iftikzexternal
\title{\LARGE \bf
Probabilistic Alternating Simulations for Policy Synthesis in \\ Uncertain Stochastic Dynamical Systems
}
\author{Thom Badings and Alessandro Abate%
\thanks{This research was supported by EPSRC grant EP/Y028872/1, Mathematical Foundations of Intelligence: An ``Erlangen Programme'' for AI.}%
\thanks{Thom Badings and Alessandro Abate are with the Department of
Computer Science, University of Oxford, United Kingdom. {\tt\small \{thom.badings,alessandro.abate\}@cs.ox.ac.uk}.}%
}
\begin{document}

\maketitle
\thispagestyle{empty}
\pagestyle{empty}

\begin{abstract}
A classical approach to formal policy synthesis in stochastic dynamical systems is to construct a finite-state abstraction, often represented as a Markov decision process (MDP). The correctness of these approaches hinges on a behavioural relation between the dynamical system and its abstraction, such as a probabilistic simulation relation. However, probabilistic simulation relations do not suffice when the system dynamics are, next to being stochastic, also subject to nondeterministic (i.e., set-valued) disturbances. In this work, we extend probabilistic simulation relations to systems with both stochastic and nondeterministic disturbances. Our relation, which is inspired by a notion of alternating simulation, generalises existing relations used for verification and policy synthesis used in several works. Intuitively, our relation allows reasoning probabilistically over stochastic uncertainty, while reasoning robustly (i.e., adversarially) over nondeterministic disturbances. We experimentally demonstrate the applicability of our relations for policy synthesis in a 4D-state Dubins vehicle.
\end{abstract}

\section{Introduction}
The synthesis of (control) policies for dynamical systems that provably satisfy specific requirements is crucial for their deployment in safety-critical scenarios. 
We consider systems modelled as \emph{Markov decision processes} (MDPs)~\cite{DBLP:books/wi/Puterman94} with continuous state and action spaces.
These (continuous) MDPs capture nonlinear and stochastic dynamics and are thus widely applicable for modelling systems in uncertain environments.
Classical objectives in automatic control, such as stabilisation and tracking, are insufficient to capture the complex objectives needed for many systems.
Instead, we consider objectives in temporal logic, such as linear temporal logic (LTL) %
and probabilistic computation tree logic (PCTL). %
Temporal logic enables formulating complex, high-level objectives involving periodic, sequential, or reactive tasks~\cite{BaierKatoen08}.

Approaches to policy synthesis with temporal logic specifications broadly fall into two categories.
First, certificate-based approaches aim to find a (Lyapunov-like) function that implies the satisfaction of a specification~\cite{DBLP:conf/cav/AbateGR25,DBLP:journals/tac/PrajnaJP07}.
The second category, which we focus on in this paper, replaces the continuous MDP (the ``\emph{concrete}'' system) with a simpler, finite MDP (the ``\emph{abstraction}'') and uses model checking techniques~\cite{BaierKatoen08} to compute a policy on this abstraction.
These abstractions are classically model-based~\cite{DBLP:journals/automatica/AbatePLS08,DBLP:journals/tac/ZamaniEMAL14,DBLP:journals/tac/LahijanianAB15,DBLP:journals/jair/BadingsRAPPSJ23,DBLP:conf/hybrid/MathiesenHL25}, but recent works study data-driven approaches as well~\cite{DBLP:conf/l4dc/GraciaBLL24,pmlr-v283-nazeri25a,DBLP:journals/csysl/LavaeiSFZ23}.
Although abstractions tend to explode with the state dimension, they can handle rich specifications and natively capture stochasticity~\cite{DBLP:journals/automatica/LavaeiSAZ22}.

The correctness of abstraction techniques hinges on a \emph{behavioural relation} between the concrete system and the abstraction~\cite{DBLP:books/daglib/0032856}.
Such a relation ensures that any policy for the abstraction can be \emph{refined} into a policy for the concrete system \emph{with equivalent performance guarantees}.
Relations for synthesis in \emph{nonstochastic} systems have been well-studied, leading to, e.g., (approximate) simulation relations~\cite{DBLP:journals/tac/GirardP07}, feedback refinement relations~\cite{DBLP:journals/tac/ReissigWR17}, and memoryless concretisation relations~\cite{DBLP:conf/hybrid/CalbertMGJ24}.
For \emph{stochastic} systems such as MDPs, most papers leverage (approximate) \emph{probabilistic simulations} to ensure the soundness of abstraction techniques~\cite{DBLP:journals/siamco/HaesaertSA17,DBLP:journals/jair/BadingsRAPPSJ23}.

Loosely speaking, the system (I) probabilistically simulates another system (II) if, for every policy of system (II), there exists a policy for system (I) such that their output behaviour is equivalent.
Technically, probabilistic simulation thus requires that the closed-loop system under a given policy is a stochastic process.
As a result, probabilistic simulation does not allow for nondeterministic (i.e., set-valued) disturbances in the MDP's dynamics.
Such disturbances naturally arise in systems with uncertain parameters and multi-agent systems~\cite{BaierKatoen08}.

To solve this problem, we extend probabilistic simulation relations to systems with both stochastic and nondeterministic dynamics.
We model such systems as \emph{robust MDPs} (RMDPs), which extend MDPs with \emph{sets of probability distributions}, often as convex polytopes for tractability~\cite{DBLP:journals/ior/NilimG05,DBLP:journals/mor/WiesemannKR13}.
While RMDPs with finite state and action spaces (and with finite state but continuous action spaces~\cite{DBLP:conf/hybrid/Delimpaltadakis23}) are well-studied~\cite{DBLP:conf/aistats/PanagantiK22}, we consider their full generalisation to continuous spaces.
We develop a behavioural relation for continuous RMDPs, inspired by the notion of \emph{alternating simulation} for two-player (stochastic) games~\cite{DBLP:conf/concur/AlurHKV98,DBLP:conf/sofsem/ZhangP12}.
Much like~\cite{DBLP:journals/siamco/HaesaertSA17} extends probabilistic simulation to continuous MDPs, we extend probabilistic \emph{alternating} simulation to continuous MDPs with \emph{set-valued dynamics}.
Our relation treats the nondeterminism as a second player in a game, which allows robust reasoning against these disturbances.
Thus, probabilistic alternating simulations allow reasoning probabilistically over stochasticity, and robustly over nondeterministic disturbances.
Our relations are closely related to~\cite{DBLP:journals/automatica/ZhongLZC23}, which follows a slightly different formalisation not based on \emph{alternating} notions of simulation.

In summary, our main contribution is a novel probabilistic alternating simulation for stochastic dynamical systems with uncertain dynamics.
After the preliminaries in~\cref{sec:preliminaries}, we present our theoretical results in~\cref{sec:RMDPs,sec:PASR}.
We also discuss how our relations generalise some others used in existing works.
To showcase the applicability, we use our results in \cref{sec:experiments} to synthesise policies with reach-avoid guarantees for a Dubins vehicle with a 4D state space.

\section{Preliminaries}
\label{sec:preliminaries}

A Polish space is a separable completely metrisable topo\-logical space. %
The power set %
over $X$ is written $2^X$.
A probability space $\tuple{\Omega, \mathcal{F}, \Prob}$ consists of a sample space $\Omega$, a $\sigma$-algebra $\mathcal{F}$, and a probability measure $\Prob \colon \mathcal{F} \to [0,1]$.
We denote the Borel $\sigma$-algebra over a set $X$ by $\Borel{X}$.
The set of all distributions over an (in)finite set $X$ is denoted by $\distr{X}$.
A set $\Relation \subseteq X \times Y$ is called a \emph{binary relation} between sets $X$ and $Y$, for which we write $\Rel{x} \coloneqq \{ y \in Y: (x,y) \in \Relation \}$ and $\Relinv{y} \coloneqq \{ x \in X : (x,y) \in \Relation \}$.
For subsets $X' \subset X$ and $Y' \subset Y$, we write $\Rel{X'} \coloneqq \{ y \in Y: \exists x \in X', (x,y) \in \Relation \}$ and $\Relinv{Y'} \coloneqq \{ x \in X: \exists y \in Y', (x,y) \in \Relation \}$.
The relation $\Relation$ is \emph{single-valued} if $|\Rel{x}| = 1$ for all $x \in X$, in which case $\Relation$ induces a partition into equivalence classes.

\subsection{Continuous Markov decision processes}
We consider discrete-time nonlinear stochastic systems, modelled as a (continuous) Markov decision process (MDP).

\begin{definition}[MDP]
    \label{def:MDP}
    A (continuous) Markov decision process (MDP) is a tuple $\mdp = \MDP$, where
    \begin{itemize}
        \item $\States$ is a Polish space, called the \emph{state space},
        \item $\initState \in \distr{\States}$ is a probability measure on $(\States,\Borel{\States})$ modelling the \emph{initial state distribution},
        \item $\Actions$ is a Polish space, called the \emph{action space},
        \item $\transfunc$ is a \emph{stochastic kernel} that assigns to each $x \in \States$ and $u \in \Actions$ a probability measure $\transfunc(\cdot \mid x,u)$ over $(\States,\Borel{\States})$,
        \item $\Labels$ is a finite set of labels, and
        \item $\labelfunc \colon \Borel{\States} \to 2^\Labels$ is a measurable \emph{labelling function} that assigns to each state a (possibly empty) subset of labels.
    \end{itemize}
\end{definition}

\begin{example}
    \label{example}
    Consider a Dubins vehicle with a 4D state $[x_k, y_k, \theta_k, V_k] \in \RR^4$, whose dynamics are defined as
    \begin{align}
        \label{eq:Dubins}
        \begin{bmatrix}
            x_{k+1} \\ y_{k+1} \\ \theta_{k+1} \\ V_{k+1}
        \end{bmatrix} = \begin{bmatrix}
            x_k \\ y_k \\ \theta_k \\ \beta \cdot V_k
        \end{bmatrix} + \delta \cdot \begin{bmatrix}
            V_k \cdot \cos{\theta_k} \\
            V_k \cdot \sin{\theta_k} \\
            \alpha \cdot u_k + w_k \\
            u'_k,
        \end{bmatrix},
    \end{align}
    with time discretization $\delta > 0$, steering sensitivity $\alpha > 0$, drag coefficient $\beta > 0$, and Gaussian noise $w_k \sim \Gauss{0}{0.1}$.
    We model this system as MDP $\mdp$ with states $\States = \RR^4$, inputs $[u_k, u'_k] \in \Actions \subset \RR^2$, and stochastic kernel $\transfunc$ given by \cref{eq:Dubins}.
\end{example}

\textbf{Policies.}
The actions in an MDP are selected by a Markov policy, which acts as a time-varying feedback controller.

\begin{definition}[Markov policy]
    \label{def:policy}
    A (Markov) policy $\policy$ for an MDP $\mdp = \MDP$ is a sequence $\policy = (\policy_0, \policy_1, \ldots)$, where each $\policy_k \colon \States \to \distr{\Actions}$ is a universally measurable map.
\end{definition}

Observe that the policy maps from \emph{states} $x \in \States$ (and not from \emph{labels} $y \in \Labels$, as with policies for partially observable MDPs).
Instead, the labelling function of the MDP defines the space in which we express the desired system behaviour.

\textbf{Execution.}
For a policy $\policy$, the sequence of states $x_0, x_1, \ldots$ is given by sampling $x_0 \sim \initState$ and $x_{k+1} \sim \transfunc(\cdot \mid x_k, \mu_k(x_k))$ for all $k \in \NN$.
Fixing a policy for an MDP thus creates a Markov process in the space of \emph{executions}.
Formally, this execution $\{ x_k \}_{k \in \NN}$ is a stochastic process defined on the probability space $(\Omega, \Borel{\Omega}, \Prob_{\mdp}^{\policy})$ with the sample space $\Omega = \States \times \States \times \cdots$ and the Borel $\sigma$-algebra $\Borel{\Omega}$ over $\Omega$, and where the probability measure $\Prob_{\mdp}^{\policy} \colon \Borel{\Omega} \to [0,1]$ is uniquely defined~\cite[Proposition~7.45]{Bertsekas.Shreve78}.
A \emph{sampled execution} is a sequence $(x_0, x_1, \ldots) \in \Omega$ of states such that $x_{k+1} \in \support{\transfunc(\cdot \mid x_k, \policy_k(x_k))} \,\forall k \in \NN$.
Executions over finite horizons are defined analogously.

\subsection{Probabilistic simulation relations}
We review the \emph{probabilistic simulation relation} (PSR) for MDPs proposed by~\cite{DBLP:journals/siamco/HaesaertSA17}.\footnote{We remark that~\cite{DBLP:journals/siamco/HaesaertSA17} considers so-called \emph{general MDPs}, which are a generalisation of our (continuous) MDPs with a metric on the output space. We instead restrict ourselves to the labelling function $\labelfunc$ in \cref{def:MDP}.}
A PSR is based on a binary relation {$\Relation \subseteq \States_1 \times \States_2$} between the states of two MDPs $\mdp_i = \MDPi{i}, \, i=1,2$ sharing the same set of labels $\Labels$.
Towards recapping this result, we define the \emph{lifting} of such a relation from states to distributions over states.

\begin{definition}[Lifted relation~\cite{DBLP:journals/siamco/HaesaertSA17}]
    \label{def:lifting}
    Let $\Relation \subseteq \States_1 \times \States_2$ be a relation between $(\States_1, \Borel{\States_1})$ and $(\States_2, \Borel{\States_2})$.
    The relation $\liftRelation \subseteq \distr{\States_1, \Borel{\States_1}} \times \distr{\States_2, \Borel{\States_2}}$ is called a \emph{lifting of relation $\Relation$} if $(\Delta, \Theta) \in \liftRelation$ holds for all $\Delta \in \distr{\States_1, \Borel{\States_1}}$ and $\Theta \in \distr{\States_2, \Borel{\States_2}}$ for which there exists a probability space $(\States_1 \times \States_2, \Borel{\States_1 \times \States_2}, \mathbb{W})$ satisfying:
    \begin{enumerate}
        \item for all $X_1 \in \Borel{\States_1}$ it holds that $\mathbb{W}(X_1, \States_2) = \Delta(X_1)$,
        \item for all $X_2 \in \Borel{\States_2}$ it holds that $\mathbb{W}(\States_1, X_2) = \Theta(X_2)$,
        \item $\mathbb{W}(\Relation) = 1$.
    \end{enumerate}
\end{definition}

Intuitively, two distributions $\Delta \in \distr{\States_1, \Borel{\States_1}}$ and $\Theta \in \distr{\States_2, \Borel{\States_2}}$ are related, i.e., $(\Delta, \Theta) \in \liftRelation$, if there exists another distribution in the product space $\States_1 \times \States_2$ such that the marginals recover $\Delta$ and $\Theta$, and such that the probability $\mathbb{W}(\Relation)$ of the event $\Relation$ is one.

\begin{example}
    \label{example:lifting}
    Consider the relation $\Relation \subseteq \RR \times \RR_{\geq 0}$ defined as $(x,y) \in R \iff |x| = y$, i.e., $(x,y)$ are related if the absolute value of $x$ equals $y$.
    Consider two uniform distributions $\Delta = U(-1,1)$ and $\Theta = U(0,1)$.
    These distributions are related by the lifting of $\Relation$, i.e., $(\Delta, \Theta) \in \liftRelation$, since the uniform distribution over the set $W$ depicted in \cref{fig:lifting} satisfies the three conditions in \cref{def:lifting}.
\end{example}

We now recap the probabilistic simulation relation from~\cite{DBLP:journals/siamco/HaesaertSA17} as a relation between two continuous MDPs.

\begin{definition}[Prob. simulation~\cite{DBLP:journals/siamco/HaesaertSA17}]
    \label{def:PSR}
    Consider two MDPs $\mdp_i = \MDPi{i}$, $i=1,2$ with the same set of labels $\Labels$.
    A single-valued\footnote{These results may be generalised beyond single-valued relations, which, however, requires a more involved policy refinement step.} binary relation $\Relation \subseteq \States_1 \times \States_2$ is a \emph{probabilistic simulation relation (PSR)} from $\mdp_2$ to $\mdp_1$ if:
    \begin{enumerate}
        \item %
        for the initial distributions, we have $(\initState_1, \initState_2) \in \liftRelation$;
        \item %
        for all $(x_1, x_2) \in \Relation$, we have%
        \begin{equation}
        \begin{split}
            \label{eq:PSR_next}
            &\forall u_2 \in \Actions_2, \, \exists u_1 \in \Actions_1 \, \text{such that}
            \\
            &\quad \big( \transfuncR_1(\cdot \mid x_1, u_1), \transfuncR_2(\cdot \mid x_2, u_2) \big) \in \liftRelation;
        \end{split}
        \end{equation}
        \item %
        for all $(x_1, x_2) \in \Relation$, we have $\labelfunc_1(x_1) = \labelfunc_2(x_2)$.
    \end{enumerate}
\end{definition}

\begin{figure}[t!]
\centering
\includegraphics[scale=0.82]{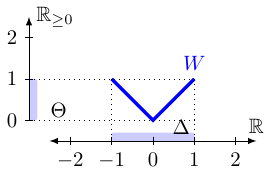}
\caption{Uniform distributions $\Delta = U(-1,1)$ and $\Theta = U(0,1)$ for the relation from \cref{example:lifting}. The uniform distribution over the set $W$ satisfies the conditions for a lifting in \cref{def:lifting}.}
\label{fig:lifting}
\end{figure}

These conditions state that: (1) the initial state distributions are related, (2) every pair of related states leads to related distributions over next states, and (3) the labels of related states coincide.
When $\Relation$ is a PSR from $\mdp_2$ to $\mdp_1$, we say that \emph{MDP $\mdp_1$ probabilistically simulates MDP $\mdp_2$}.
We denote a PSR from $\mdp_2$ to $\mdp_1$ by $\mdp_2 \preceq \mdp_1$ (loosely speaking, all behaviour of $\mdp_2$ is \emph{contained} in that of $\mdp_1$).

In synthesis problems, $\mdp_2$ is often a (finite-state) \emph{abstraction} of $\mdp_1$.
The next result from~\cite{DBLP:journals/siamco/HaesaertSA17} enables the synthesis of a policy for $\mdp_1$ based on a policy for this abstraction $\mdp_2$.

\begin{theorem}
    \label{thm:PSR_synthesis}
    If $\mdp_2 \preceq \mdp_1$, then for every policy $\policy_2$, there exists a policy $\policy_1$ such that, for all events $\varphi \subset 2^\Labels \times 2^\Labels \times \cdots$,
    \begin{equation}   
        \label{eq:PSR_synthesis}
        \Prob_{\mdp_1}^{\policy_1}\left( \{\labelfunc_1({x_1}_k)\}_{k \in \NN} \in \varphi \right) =
        \Prob_{\mdp_2}^{\policy_2}\left( \{\labelfunc_2({x_2}_k)\}_{k \in \NN} \in \varphi \right).
    \end{equation}
\end{theorem}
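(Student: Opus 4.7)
The plan is to build the policy $\policy_1$ by a measurable selection that turns every action fired by $\policy_2$ into an action of $\mdp_1$ satisfying \eqref{eq:PSR_next}, and then to couple the two closed-loop Markov processes on the product space so that the pair of states stays in $\Relation$ at every time step. Because $\Relation \subseteq \States_1 \times \States_2$ is single-valued, for each $x_1 \in \States_1$ there is a unique $x_2 =: q(x_1) \in \States_2$ with $(x_1,x_2) \in \Relation$, so the refinement only has to produce an action in $\Actions_1$ as a (measurable) function of $x_1$, the chosen $u_2 \sim \policy_{2,k}(q(x_1))$, and the time step $k$.

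First I fix a Markov policy $\policy_2 = (\policy_{2,0},\policy_{2,1},\ldots)$. For each $k \in \NN$, each $(x_1,x_2) \in \Relation$, and each $u_2 \in \Actions_2$, condition~(2) of \cref{def:PSR} guarantees some $u_1 \in \Actions_1$ with $(\transfunc_1(\cdot \mid x_1, u_1), \transfunc_2(\cdot \mid x_2, u_2)) \in \liftRelation$. A standard measurable selection theorem (Kuratowski--Ryll-Nardzewski, as deployed in~\cite{Bertsekas.Shreve78,DBLP:journals/siamco/HaesaertSA17}) lets us pick such a witness as a universally measurable map $\kappa_k \colon \States_1 \times \Actions_2 \to \Actions_1$. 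I then define $\policy_{1,k}(x_1)$ as the pushforward of $\policy_{2,k}(q(x_1))$ by $\kappa_k(x_1,\cdot)$, i.e., $\policy_{1,k}(A \mid x_1) = \int_{\Actions_2} \indicator_{A}(\kappa_k(x_1,u_2)) \, \policy_{2,k}(du_2 \mid q(x_1))$ for each $A \in \Borel{\Actions_1}$; by construction, $\policy_{1,k}$ inherits universal measurability from $\kappa_k$ and $\policy_{2,k}$.

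Next I construct, by induction on $k$, a sequence of probability measures $\mathbb{W}_k$ on $(\States_1 \times \States_2, \Borel{\States_1 \times \States_2})$ whose marginals equal the time-$k$ laws of $x_{1,k}$ under $\Prob^{\policy_1}_{\mdp_1}$ and of $x_{2,k}$ under $\Prob^{\policy_2}_{\mdp_2}$, and which concentrate on $\Relation$. Condition~(1) of \cref{def:PSR} yields $\mathbb{W}_0$ directly. For the induction step, the definition of $\liftRelation$ together with a second measurable selection provides, for each $(x_1,x_2) \in \Relation$ and each sampled $u_2 \in \support{\policy_{2,k}(x_2)}$, a coupling $\mathbb{W}_{x_1,x_2,u_2}$ of $\transfunc_1(\cdot\mid x_1,\kappa_k(x_1,u_2))$ and $\transfunc_2(\cdot\mid x_2,u_2)$ concentrated on $\Relation$; integrating these conditional couplings against $\policy_{2,k}$ and $\mathbb{W}_k$ yields $\mathbb{W}_{k+1}$, still concentrated on $\Relation$ and with the correct marginals (the first marginal agrees with the closed-loop law under $\policy_1$ precisely because $\policy_{1,k}$ was defined as the pushforward above). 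An Ionescu--Tulcea extension glues $\{\mathbb{W}_k\}$ into a single probability measure on $(\States_1 \times \States_2)^{\NN}$ whose two coordinate processes realise $\Prob^{\policy_1}_{\mdp_1}$ and $\Prob^{\policy_2}_{\mdp_2}$ while keeping $(x_{1,k},x_{2,k}) \in \Relation$ almost surely for every $k$.

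Under this coupling, condition~(3) of \cref{def:PSR} gives $\labelfunc_1(x_{1,k}) = \labelfunc_2(x_{2,k})$ almost surely for every $k$, so the two label sequences coincide almost surely and the probability of any event $\varphi$ agrees on both sides, establishing \eqref{eq:PSR_synthesis}. The main obstacle is measurability: both $\kappa_k$ and the kernel $(x_1,x_2,u_2) \mapsto \mathbb{W}_{x_1,x_2,u_2}$ witnessing the lifting must depend universally measurably on their arguments so that the pushforward defining $\policy_{1,k}$ and the integrals defining $\mathbb{W}_{k+1}$ are well-posed. These are standard but nontrivial applications of universally measurable selection, already handled for the lifting itself in~\cite{DBLP:journals/siamco/HaesaertSA17} in the setting without set-valued dynamics.
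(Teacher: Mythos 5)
Your proposal is correct and follows essentially the same route as the paper, which does not spell out this proof itself but defers it to~\cite{DBLP:journals/siamco/HaesaertSA17} with the remark that both MDPs induce equal distributions over labelling trajectories: you refine $\policy_2$ through a measurable-selection interface, couple the two closed-loop processes on the product space so that the state pair stays in $\Relation$ (hence the labels coincide) almost surely, and conclude equality of the path laws over label sequences. This is also exactly the strategy the paper uses for the PASR analogue (\cref{thm:PASR_synthesis} via \cref{lemma:observations}), so there is nothing to flag beyond the universal-measurability technicalities you already acknowledge.
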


The proof, for which we refer to~\cite{DBLP:journals/siamco/HaesaertSA17}, uses that both MDPs induce equal distributions over labelling trajectories.
Intuitively, a policy $\policy_1$ for which \cref{eq:PSR_synthesis} holds is one that preserves the $2^\text{nd}$ PSR condition in \cref{def:PSR}.
Due to space restrictions, we only present this policy explicitly for the MDPs with set-valued dynamics, which we present next. %

\section{Continuous Robust MDPs}
\label{sec:RMDPs}
While the MDP in \cref{def:MDP} defines a very common class of stochastic models, this model definition fundamentally requires the stochastic kernel $\transfunc(\cdot \mid x,u)$ to be known precisely.
This requirement is often unrealistic, especially when the dynamics are estimated from data or subject to set-bounded disturbances, as illustrated by the following example.

\begin{example}
    \label{example2}
    Consider again the Dubins vehicle from \cref{example}.
    Suppose that the parameters $\alpha$, $\beta$ are estimated from (a limited amount of) data and are, therefore, only known up to a given interval, i.e., $\alpha \in [\munderbar{\alpha}, \bar{\alpha}]$, $\beta \in [\munderbar{\beta}, \bar{\beta}]$.
    As a result, the dynamics in \cref{eq:Dubins} have no well-defined stochastic kernel $\transfunc$, so the system cannot be modelled as an MDP.
\end{example}

Motivated by this example, we study a type of MDP with \emph{sets of stochastic kernels}.
Such models are better known as robust MDPs (RMDPs) and have been studied extensively with finite state/action spaces~\cite{DBLP:journals/ior/NilimG05,DBLP:journals/mor/WiesemannKR13}.
Here, we study a variant of RMDPs with continuous state and action spaces.

\begin{definition}[RMDP]
    \label{def:RMDP}
    A (continuous) robust MDP (RMDP) is a tuple $\rmdp = \RMDP$, where
    \begin{itemize}
        \item $\States$, $\initState$, $\Actions$, $\Labels$, and $\labelfunc$ are defined as in \cref{def:MDP},    
        \item $\Disturb$ is a Polish space, called the \emph{disturbance space}, and
        \item $\transfuncR$ is a \emph{stochastic kernel} that assigns to each $x \in \States$, $u \in \Actions$, and $v \in \Disturb$ a probability measure $\transfuncR(\cdot \mid x,u,v)$ over $(\States,\Borel{\States})$,
    \end{itemize}
\end{definition}

The stochastic kernel of an RMDP is, compared to the MDP in \cref{def:MDP}, also conditioned on the disturbance $v \in \Disturb$.
Thus, an RMDP can be interpreted as a 2-player stochastic game, where player 1 chooses an action $u \in \Actions$ and player 2 chooses a disturbance $v \in \Disturb$, which together fix a distribution over next states given by the stochastic kernel $\transfuncR(\cdot \mid x,u,v)$.

\begin{example}
    \label{example3}
    Consider again the Dubins vehicle with uncertain coefficients $\alpha$ and $\beta$ from \cref{example2}. 
    This system can be modelled as an RMDP, where the disturbance space is defined as $\Disturb = [\munderbar{\alpha}, \bar{\alpha}] \times [\munderbar{\beta}, \bar{\beta}]$.
\end{example}

\textbf{Adversary and policy.}
The disturbances in an RMDP are chosen by a (Markov) \emph{adversary} (or \emph{policy of nature}~\cite{DBLP:journals/ior/NilimG05}):

\begin{definition}[Adversary]
    \label{def:adversary}
    A (Markov) adversary $\adversary$ for an RMDP $\rmdp = \RMDP$ is a sequence $\adversary = (\adversary_0, \adversary_1, \ldots)$, where each $\adversary_k$ is a universally measurable map defined as $\adversary_k \colon \States \to \distr{\Disturb}$.
\end{definition}

The definition of a Markov policy (\cref{def:policy}) carries over to RMDPs immediately.
Furthermore, observe that an MDP is a special case of an RMDP with a singleton set $\Disturb$.

\begin{remark}
    The Markovianity of the adversary in \cref{def:adversary} means that the choice of the disturbance $v \in \Disturb$ is independent between the time steps.
    For the Dubins vehicle example, this (conservatively) implies that the adversary can select different parameter values at each step.
    Modelling fixed but unknown parameter values leads to a partially observable model, which drastically increases the complexity of solution methods.
\end{remark}

\textbf{Execution.}
Executions and sample paths for an RMDP are defined by fixing both a policy and an adversary.
That is, an RMDP execution $\{ x_k \}_{k \in \NN}$ is a stochastic process defined on the probability space $(\Omega, \Borel{\Omega}, \Prob_{\rmdp}^{\policy,\adversary})$ with the sample space $\Omega = \States \times \States \times \cdots$, the Borel $\sigma$-algebra~$\Borel{\Omega}$ over $\Omega$, and the (uniquely defined) probability measure $\Prob_{\rmdp}^{\policy,\adversary} \colon \Borel{\Omega} \to [0,1]$.
A sample path is an infinite sequence $\pi = (x_0, x_1, \ldots) \in \Omega$ of states, such that $x_{k+1} \in \support{\transfuncR(\cdot \mid x_k, \policy_k(x_k), \adversary_k(x_k))}$. 
As for MDPs, we use the probability measure $\Prob_{\rmdp}^{\policy,\adversary}$ to reason about the probability that the RMDP satisfies a given specification or control task.

\section{Probabilistic Alternating Simulations}
\label{sec:PASR}
Recall that the PSR from \cref{def:PSR} asserts that, for all related states $(x_1, x_2) \in \Relation$ and for all inputs $u_2 \in \Actions_2$ for MDP $\mdp_2$, there exists an input $u_1 \in \Actions_1$ for MDP $\mdp_1$ such that the resulting kernels $\transfunc_1$ and $\transfunc_2$ are related by the lifted relation $\liftRelation$.
It is apparent that such a PSR is not suited to relate two RMDPs $\rmdp_1$ and $\rmdp_2$, because it does not account for the disturbances $v_1 \in \Disturb_1$ and $v_2 \in \Disturb_2$.
Hence, in this section, we extend the PSR with a condition over the disturbances, leading to a so-called \emph{alternating} notion of simulation~\cite{DBLP:conf/concur/AlurHKV98}.

\subsection{Probabilistic alternating simulation relations}
In an alternating simulation, the matching of related states involves \emph{two} layers of quantification: (1) over the actions $u_2$ and $u_1$, and (2) over the disturbances $v_1$ and $v_2$.
As a key contribution, we extend the PSR from \cref{def:PSR} to RMDPs, by adding this alternation over the disturbances.
This definition is, again, based on lifting a relation $\Relation$ between states, to a relation $\liftRelation$ over distributions (see \cref{def:lifting}).
We first provide the formal definition and discuss its intuition thereafter.

\begin{definition}[Prob. alternating simulation]
    \label{def:PASR}
    Consider two RMDPs $\rmdp_i = \RMDPi{i}$, $i=1,2$ with the same set of labels $\Labels$.
    A single-valued binary relation $\Relation \subseteq \States_1 \times \States_2$ is a \emph{probabilistic alternating simulation relation (PASR)} from $\rmdp_2$ to $\rmdp_1$ if:
    \begin{enumerate}
        \item %
        for the initial distributions, we have $(\initState_1, \initState_2) \in \liftRelation$;
        \item %
        for all $(x_1, x_2) \in \Relation$, we have
        \begin{align}
            &\forall u_2 \in \Actions_2, \, \exists u_1 \in \Actions_1, \, \forall v_1 \in \Disturb_1, \, \exists v_2 \in \Disturb_2
            \label{eq:PASR_next}
            \\
            &\!\!\!\!\!\text{such that}\,\big( \transfuncR_1(\cdot \mid x_1, u_1, v_1), \transfuncR_2(\cdot \mid x_2, u_2, v_2) \big) \in \liftRelation;
            \nonumber
        \end{align}
        \item %
        for all $(x_1, x_2) \in \Relation$, we have $\labelfunc_1(x_1) = \labelfunc_2(x_2)$.
    \end{enumerate}
\end{definition}

Like we write $\mdp_2 \preceq \mdp_1$ to denote a PSR, we write $\rmdp_2 \preceq_\text{alt} \rmdp_1$ to denote a PASR from RMDP $\rmdp_2$ to RMDP $\rmdp_1$.

\subsection{Game interpretation}
Intuitively, condition (2) in \cref{def:PASR} can be interpreted as a game between a \emph{protagonist} and an \emph{antagonist} (which are, importantly, different from the policies $\policy$ and the adversaries $\adversary$ of the RMDPs)~\cite{DBLP:conf/concur/AlurHKV98}.
The antagonist controls the $\forall$-quantifiers, whereas the protagonist controls the $\exists$-quantifiers, i.e.,
\begin{enumerate}
    \item the antagonist chooses an action $u_2 \in \Actions_2$ in $\rmdp_2$;
    \item the protagonist chooses an action $u_1 \in \Actions_1$ in $\rmdp_1$;
    \item the antagonist chooses a disturbance $v_1 \in \Disturb_1$ in $\rmdp_1$;
    \item the protagonist chooses a disturbance $v_2 \in \Disturb_2$ in $\rmdp_2$.
\end{enumerate}
Condition (2) in \cref{def:PASR} requires that, for all $(x_1, x_2) \in \Relation$, the protagonist can choose $u_1$ and $v_2$ such that, no matter what $u_2$ and $v_1$ the antagonist chose, the stochastic kernels $\transfuncR_1$ and $\transfuncR_2$ are related by the lifted relation $\liftRelation$.
This crucial fact will form the basis for policy synthesis with PASRs.

\begin{example}
    \label{example:PASR}
    As a simple example of a PASR, consider the $1$-step RMDPs $\rmdp_1$ and $\rmdp_2$ in \cref{fig:PASR}, where the colors indicate related states.
    For simplicity, suppose $\Actions_i$ and $\Disturb_i$ are all discrete, and that for all $u_i \in \Actions_i$ and $v_i \in \Disturb_i$, the kernels are Dirac distributions.
    We claim that $\rmdp_2 \preceq_\text{alt} \rmdp_1$, i.e., the relation induced by the colouring in \cref{fig:PASR} is a PASR from $\rmdp_2$ to $\rmdp_1$.
    To see why, we can unfold all cases of the game interpretation for condition (2) of \cref{def:PASR}: 
    \begin{itemize}
        \item If the antagonist chooses $u_2$ in $\rmdp_2$, then the protagonist chooses $u_1$ in $\rmdp_1$. Then, if
            (a) the antagonist chooses $v_1$ in $\rmdp_1$, then the protagonist chooses $v'_2$ in $\rmdp_2$, whereas if
            (b) the antagonist chooses $v'_1$ in $\rmdp_1$, then the protagonist chooses $v_2$ in $\rmdp_2$.
        \item If the antagonist chooses $u'_2$ in $\rmdp_2$, then the protagonist also chooses $u_1$ in $\rmdp_1$. Then, if
            (a) the antagonist chooses $v_1$ in $\rmdp_1$, then the protagonist chooses $v_2$ in $\rmdp_2$, whereas if
            (b) the antagonist chooses $v'_1$ in $\rmdp_1$, then the protagonist chooses $v'_2$ in $\rmdp_2$.
    \end{itemize}
    Observe that all cases lead to related next states (i.e., states with the same colour in \cref{fig:PASR}), thus preserving the PASR.
\end{example}

\subsection{Policy refinement}

The existence of a PASR between two RMDPs can (like a PSR between two MDPs) be used to synthesise Markov policies.
Fix RMDPs $\rmdp_1$ and $\rmdp_2$, and let $\Relation \subseteq \States_1 \times \States_2$ be a PASR from $\rmdp_2$ to $\rmdp_1$, i.e., $\rmdp_2 \preceq_\text{alt} \rmdp_1$.
An interface function \emph{refines} a policy $\policy_2$ for $\rmdp_2$ into a policy $\policy_1$ for $\rmdp_1$ such that the PASR is preserved.

\begin{figure}[t!]
\centering
\includegraphics[scale=0.88]{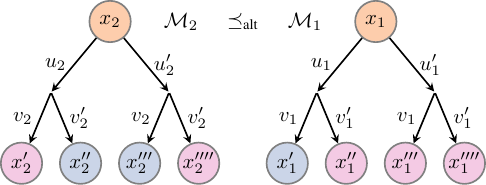}%
\caption{Visualisation for a single step of condition (2) in \cref{def:PASR}, for a PASR from RMDP $\rmdp_2$ to $\rmdp_1$, i.e., $\rmdp_2 \preceq_\text{alt} \rmdp_1$.}
\label{fig:PASR}
\end{figure}

\begin{definition}[Interface function]
    \label{def:interface}
    An \emph{interface (function)} $I \colon \States_1 \times \States_2 \times \Actions_2 \to 2^{\Actions_1}$ from $\rmdp_2$ to $\rmdp_1$ is a set-valued map defined for all $(x_1, x_2) \in \Relation$ and $u_2 \in \Actions_2$ as
    \begin{align*}
    I(x_1, x_2, u_2) {}&{} = \Big\{ 
        u_1 \in \Actions_1 : \,\,\,
        \forall v_1 \in \Disturb_1, \, \exists v_2 \in \Disturb_2, \, 
        \\
        & \left( \transfuncR_1(\cdot \mid x_1, u_1, v_1), \transfuncR_2(\cdot \mid x_2, u_2, v_2) \right) \in \liftRelation
    \Big\}.
    \end{align*}
\end{definition}

\begin{lemma}[Nonemptyiness]
    \label{lemma:nonempty_interface}
    $\rmdp_2 \preceq_\text{alt} \rmdp_1$ implies that $I(x_1, x_2, u_2)$ is nonempty for all $x_1 \in \States_1, x_2 \in \States_2, u_2 \in \Actions_u$.
\end{lemma}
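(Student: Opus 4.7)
The plan is to observe that the lemma is essentially a direct restatement of condition (2) of \cref{def:PASR} in the language of the interface from \cref{def:interface}; the proof is therefore a matter of unpacking quantifiers, not genuine mathematical work. Note first that the interface $I(x_1,x_2,u_2)$ is only defined for related pairs $(x_1,x_2)\in\Relation$, so we interpret the lemma accordingly: we want to show that for every $(x_1,x_2)\in\Relation$ and every $u_2\in\Actions_2$, the set $I(x_1,x_2,u_2)\subseteq\Actions_1$ is nonempty.

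The steps I would carry out are as follows. First, fix arbitrary $(x_1,x_2)\in\Relation$ and $u_2\in\Actions_2$. Second, apply condition (2) of \cref{def:PASR}: since $\rmdp_2\preceq_\text{alt}\rmdp_1$, the quantifier block
\begin{equation*}
\forall u_2\in\Actions_2,\ \exists u_1\in\Actions_1,\ \forall v_1\in\Disturb_1,\ \exists v_2\in\Disturb_2
\end{equation*}
followed by $\bigl(\transfuncR_1(\cdot\mid x_1,u_1,v_1),\transfuncR_2(\cdot\mid x_2,u_2,v_2)\bigr)\in\liftRelation$ holds at this pair. Instantiating the outer $\forall u_2$ with our chosen $u_2$ yields some witness $u_1^\star\in\Actions_1$ satisfying the inner $\forall v_1\,\exists v_2$ requirement. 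Third, compare this with \cref{def:interface}: that inner requirement is exactly the membership condition defining $I(x_1,x_2,u_2)$. Hence $u_1^\star\in I(x_1,x_2,u_2)$, so the interface is nonempty at $(x_1,x_2,u_2)$.

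I do not anticipate any genuine obstacle: no measurability, lifting, or game-theoretic argument is needed beyond what is already built into the PASR definition. The only minor care is being explicit that the ``$\exists u_1$'' witness produced by condition (2) of \cref{def:PASR} and the ``$u_1$ such that $\forall v_1\,\exists v_2\,\ldots$'' predicate carved out by \cref{def:interface} are literally the same logical formula, so the existence of the former is logically equivalent to nonemptiness of the latter. Thus the lemma can be proved in a few lines by direct quantifier-unfolding.
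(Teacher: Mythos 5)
Your proof is correct and follows essentially the same route as the paper's: both arguments are a direct quantifier-unfolding of condition (2) of \cref{def:PASR}, instantiating the outer $\forall u_2$ to obtain a witness $u_1$ whose defining property is literally the membership condition of \cref{def:interface}. The only cosmetic difference is in handling the loose quantification of the lemma statement over all $x_1,x_2$: you restrict to related pairs $(x_1,x_2)\in\Relation$ (where $I$ is defined), while the paper invokes single-valuedness of $\Relation$ to associate an $x_2$ to each $x_1$ --- both are acceptable readings and neither changes the substance.
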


\begin{proof}
    A PASR $\Relation$ is single-valued by definition, so for all $x_1 \in \States_1$, there exists an $x_2 \in \States_2$ such that $(x_1,x_2) \in \Relation$.
    By \cref{def:PASR}, for all $(x_1,x_2) \in \Relation$ and all $u_2 \in \Actions_2$, there exists an action $u_1 \in \Actions_1$ such that 
    \begin{align*}
    \big( \transfuncR_1(\cdot \mid x_1, u_1, v_1), {}&{} \transfuncR_2(\cdot \mid x_2, u_2, v_2) \big) \in \liftRelation, 
    \\
    &\forall v_1 \in \Disturb_1, \, \exists v_2 \in \Disturb_2,
    \end{align*}
    which equals the definition of the interface, so $I(x_1, x_2, u_2) \neq \emptyset$ for all $x_1 \in \States_1$, $x_2 \in \States_2$, and $u_2 \in \Actions_2$.
\end{proof}

Towards the main result, we present the following lemma, which states that, under a PASR $\rmdp_2 \preceq_\text{alt} \rmdp_1$ and an interface function, a pair of related states $(x_1, x_2) \in \Relation$ leads to equal distributions over labels $2^\Labels$ in the next states.
For this lemma, let $\Prob^\rmdp_{x,u,v}(x' \in A) = \int_A \transfuncR(dy \mid x, u, v)$ be the probability that the next state $x'$ is contained in $A \in \Borel{\States}$ when the current state is $x$, and action $u$ and disturbance $v$ are executed.

\begin{lemma}
    \label{lemma:observations}
    Let $\rmdp_1$ and $\rmdp_2$ be two RMDPs such that $\rmdp_2 \preceq_\text{alt} \rmdp_1$.
    Fix $(x_1, x_2) \in \Relation$, $u_2 \in \Actions_2$, and $u_1 \in I(x_1, x_2, u_2)$.
    Then, for all $v_1 \in \Disturb_1$, there exists $v_2 \in \Disturb_2$ such that for all subsets of labels $L \in 2^\Labels$, it holds that
    \begin{equation}
        \label{eq:lemma_observation}
        \Prob^{\rmdp_1}_{x_1,u_1,v_1}(h_1(x_1') = L) = 
        \Prob^{\rmdp_2}_{x_2,u_2,v_2}(h_2(x_2') = L).
    \end{equation}
\end{lemma}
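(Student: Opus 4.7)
The plan is to chain together the interface's defining property, the coupling guaranteed by the lifting, and the label-preservation condition of the PASR. Since $u_1 \in I(x_1, x_2, u_2)$, unfolding \cref{def:interface} directly yields, for each $v_1 \in \Disturb_1$, a $v_2 \in \Disturb_2$ such that $\bigl(\transfuncR_1(\cdot \mid x_1, u_1, v_1), \transfuncR_2(\cdot \mid x_2, u_2, v_2)\bigr) \in \liftRelation$. I fix this $v_2$ for the remainder of the argument; it will be the witness required by the existential in the conclusion.

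Next, I would instantiate \cref{def:lifting}: the fact that the two kernels lie in $\liftRelation$ gives a probability measure $\mathbb{W}$ on $(\States_1 \times \States_2, \Borel{\States_1 \times \States_2})$ whose marginals coincide with $\transfuncR_1(\cdot \mid x_1, u_1, v_1)$ and $\transfuncR_2(\cdot \mid x_2, u_2, v_2)$ respectively, and which satisfies $\mathbb{W}(\Relation) = 1$. For any $L \in 2^{\Labels}$, let $X_i^L \coloneqq \labelfunc_i^{-1}(L) \in \Borel{\States_i}$, which is measurable since $\labelfunc_i$ is measurable by \cref{def:MDP}. The marginal conditions then give $\Prob^{\rmdp_1}_{x_1,u_1,v_1}(\labelfunc_1(x_1') = L) = \mathbb{W}(X_1^L \times \States_2)$, and analogously $\Prob^{\rmdp_2}_{x_2,u_2,v_2}(\labelfunc_2(x_2') = L) = \mathbb{W}(\States_1 \times X_2^L)$.

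The crux is then condition (3) of \cref{def:PASR}: whenever $(y_1, y_2) \in \Relation$, we have $\labelfunc_1(y_1) = \labelfunc_2(y_2)$, so $y_1 \in X_1^L \iff y_2 \in X_2^L$. Consequently, $(X_1^L \times \States_2) \cap \Relation = (\States_1 \times X_2^L) \cap \Relation$. Since $\mathbb{W}(\Relation) = 1$, both events have the same $\mathbb{W}$-measure, and combining with the two marginal identities yields \cref{eq:lemma_observation}.

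The main subtlety I anticipate is the measurability bookkeeping: we must tacitly assume $\Relation$ and its sections are measurable in the appropriate product $\sigma$-algebra (this is implicit in \cref{def:lifting}, since $\mathbb{W}(\Relation)$ must be well-defined there), and we need the label preimages $X_i^L$ to be measurable (immediate from measurability of $\labelfunc_i$). Once these routine hypotheses are in place, the argument collapses to the set-equality on $\Relation$ combined with the marginal identities described above.
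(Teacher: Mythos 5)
Your proof is correct and follows essentially the same route as the paper's: unfold the interface definition to obtain the witness $v_2$ and the lifted-relation membership, then use the coupling $\mathbb{W}$ from \cref{def:lifting} together with measurability of the labelling functions to equate the two probabilities. In fact, your argument is slightly more explicit than the paper's at the final step, since you spell out how condition (3) of \cref{def:PASR} and $\mathbb{W}(\Relation)=1$ force $(X_1^L \times \States_2)$ and $(\States_1 \times X_2^L)$ to agree up to a $\mathbb{W}$-null set, which the paper leaves implicit.
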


\begin{proof}
    By \cref{def:interface}, restricting $u_1$ to the interface function $I(x_1, x_2, u_2)$ implies that condition (2) of the PASR in \cref{def:PASR} is satisfied, i.e., $\forall v_1 \in \Disturb_1, \, \exists v_2 \in \Disturb_2$ such that
    \begin{equation}
        \label{eq:lemma_proof1}
        \big( \transfuncR_1(\cdot \mid x_1, u_1, v_1), \transfuncR_2(\cdot \mid x_2, u_2, v_2) \big) \in \liftRelation.
    \end{equation}
    By \cref{def:lifting} of the lifted relation $\liftRelation$, \cref{eq:lemma_proof1} implies that for all $X_1 \in \Borel{\States_1}$, it holds that
    $
    \Prob^{\rmdp_1}_{x_1,u_1,v_1}(x'_1 \in X_1) = \Prob^{\rmdp_2}_{x_2,u_2,v_2}(x'_2 \in \Relation(X_1)).
    $
    Conversely, for all $X_2 \in \Borel{\States_2}$,
    $
    \Prob^{\rmdp_2}_{x_2,u_2,v_2}(x'_2 \in X_2) = \Prob^{\rmdp_1}_{x_1,u_1,v_1}(x'_1 \in \Relinv{X_2}).
    $
    Finally, since the labelling functions $h_1$ and $h_2$ are Borel measurable, we arrive at \cref{eq:lemma_observation} and thus conclude the proof.
\end{proof}

The following theorem is the main result of this paper and shows that a PASR $\rmdp_2 \preceq_\text{alt} \rmdp_1$ allows to \emph{refine} any policy $\policy_2$ for RMDP $\rmdp_2$ (i.e., the abstraction) to a policy $\policy_1$ for RMDP $\rmdp_1$ (i.e., the concrete system).
This refined policy has \emph{at least} the same probability of satisfying any given behavioural specification.

\begin{theorem}[Policy refinement]
    \label{thm:PASR_synthesis}
    Let $\rmdp_1$ and $\rmdp_2$ be two RMDPs.
    If $\rmdp_2 \preceq_\text{alt} \rmdp_1$, then for all policies $\policy_2$ and all events $\varphi \subset 2^\Labels \times 2^\Labels \times \cdots$, it holds that
    \begin{equation}
    \begin{split}
        \label{eq:PASR_synthesis}
        &\min_{\adversary_1} \Prob_{\rmdp_1}^{\policy_1, \adversary_1}\left( \{\labelfunc_1({x_1}_k)\}_{k \in \NN} \in \varphi \right) \geq \\
        &\qquad\qquad\qquad\min_{\adversary_2} \,\Prob_{\rmdp_2}^{\policy_2, \adversary_2}\left( \{\labelfunc_2({x_2}_k)\}_{k \in \NN} \in \varphi \right),
    \end{split}
    \end{equation}
    where the policy $\policy_1$ is defined for all $k \in \NN$ and $x_1 \in \States$ as ${\policy_1}_k(x_1) \in I(x_1, x_2, {\policy_2}_k(x_2))$, with 
    $x_2 \in \Relation(x_1)$.
\end{theorem}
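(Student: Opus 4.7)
The plan is to reduce the claim to a coupling argument: for every adversary $\adversary_1$ of $\rmdp_1$, I construct an adversary $\adversary_2$ of $\rmdp_2$ under which the induced distributions over label trajectories coincide, that is,
\[
\Prob_{\rmdp_1}^{\policy_1,\adversary_1}\bigl(\{\labelfunc_1({x_1}_k)\}_{k\in\NN} \in \varphi\bigr) = \Prob_{\rmdp_2}^{\policy_2,\adversary_2}\bigl(\{\labelfunc_2({x_2}_k)\}_{k\in\NN} \in \varphi\bigr).
\]
Since the right-hand side is at least $\min_{\tilde\adversary_2} \Prob_{\rmdp_2}^{\policy_2,\tilde\adversary_2}(\{\labelfunc_2({x_2}_k)\}_{k\in\NN} \in \varphi)$, taking the minimum over $\adversary_1$ on the left then yields \cref{eq:PASR_synthesis}.

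To construct $\adversary_2$ and certify the equality above, I build inductively on $k$ a joint probability measure $\mathbb{W}$ on the product path space $\Omega_1 \times \Omega_2$ whose marginals are the two path measures and which is concentrated on pairs of paths that are step-wise $\Relation$-related. The base case uses condition~(1) of \cref{def:PASR} to obtain a coupling $\mathbb{W}_0$ of $(\initState_1,\initState_2)$ with $\mathbb{W}_0(\Relation)=1$. For the inductive step, I fix a pair $(x_1,x_2)\in\Relation$ drawn from the current joint distribution and set $u_2 = {\policy_2}_k(x_2)$. By hypothesis, $u_1 \coloneqq {\policy_1}_k(x_1) \in I(x_1,x_2,u_2)$, so for the disturbance $v_1$ selected by $\adversary_1$ at $x_1$, \cref{def:interface} yields a witness $v_2 \in \Disturb_2$ with $\bigl(\transfuncR_1(\cdot\mid x_1,u_1,v_1),\transfuncR_2(\cdot\mid x_2,u_2,v_2)\bigr) \in \liftRelation$. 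I define ${\adversary_2}_k$ to pick this $v_2$, randomising over the fibre $\Relinv{x_2}$ using the conditional law of $x_1$ given $x_2$ under $\mathbb{W}_k$ so that ${\adversary_2}_k$ is a bona fide map $\States_2 \to \distr{\Disturb_2}$. \Cref{lemma:observations} then delivers equal one-step label distributions, and the lifted relation supplies a coupling of the successor-state distributions concentrated on $\Relation$, continuing the induction. An Ionescu--Tulcea extension yields the full path measure, and equality of the marginal label-trajectory distributions follows because $\mathbb{W}$ is concentrated on $\Relation$ and related states have identical labels by condition~(3) of \cref{def:PASR}.

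The main obstacle will be measurability. Each inductive step implicitly selects the witness disturbance $v_2$ from the set in \cref{def:interface} and the coupling measure on $(\States_1 \times \States_2, \Borel{\States_1 \times \States_2})$ witnessing the lifted relation, and both selections must be universally measurable in $(x_1,x_2,v_1)$ for the induced ${\adversary_2}_k$ to qualify as an adversary under \cref{def:adversary} and for the joint measure to be well-defined. I would discharge this by a Kuratowski--Ryll-Nardzewski / Jankov--von Neumann measurable-selection theorem, analogous to the selection argument underlying the proof of \cref{thm:PSR_synthesis} in \cite{DBLP:journals/siamco/HaesaertSA17}; Polishness of all spaces (\cref{def:RMDP}) and single-valuedness of $\Relation$ provide the required hypotheses. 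The remaining bookkeeping — lifting the per-step label equality of \cref{lemma:observations} to equality of the full trajectory law via Fubini and the extension theorem — is routine.
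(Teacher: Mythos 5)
Your proposal follows essentially the same route as the paper's own proof: reduce the min--min inequality \cref{eq:PASR_synthesis} to showing that for every adversary $\adversary_1$ there exists an adversary $\adversary_2$ achieving \emph{equality} of the label-trajectory distributions, and establish that equality via \cref{lemma:observations} together with the fact that related states lead (through the lifted relation) to related successor states. If anything, your inductive coupling construction on the product path space and the measurable-selection discussion supply details---the explicit construction of $\adversary_2$ as a map $\States_2 \to \distr{\Disturb_2}$ and its measurability---that the paper's proof leaves implicit.
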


\begin{proof}
    We will prove the theorem by showing that, for every $\tilde\adversary_1$ in $\rmdp_1$, there exists a $\tilde\adversary_2$ in $\rmdp_2$ such that
    \begin{equation}
        \Prob_{\rmdp_1}^{\policy_1, \tilde\adversary_1}\left( \{\labelfunc_1(x_k)\}_{k \in \NN} \in \varphi \right)
        =
        \Prob_{\rmdp_2}^{\policy_2, \tilde\adversary_2}\left( \{\labelfunc_2(x_k)\}_{k \in \NN} \in \varphi \right).
        \label{eq:thm_proof1}
    \end{equation}
    If for all $\tilde\adversary_1$, there exists $\tilde\adversary_2$ such that \cref{eq:thm_proof1} holds, then for $\adversary^\star_1 \in \argmin_{\adversary_1} \Prob_{\rmdp_1}^{\policy_1, \adversary_1}\left( \{\labelfunc_1(x_k)\}_{k \in \NN} \in \varphi \right)$, there exists $\tilde\adversary_2$ s.t.
    \begin{equation*}
        \Prob_{\rmdp_1}^{\policy_1, \tilde\adversary_1^\star}\left( \{\labelfunc_1(x_k)\}_{k \in \NN} \in \varphi \right)
        =
        \Prob_{\rmdp_2}^{\policy_2, \tilde\adversary_2}\left( \{\labelfunc_2(x_k)\}_{k \in \NN} \in \varphi \right).
        \label{eq:thm_proof12}
    \end{equation*}
    Thus, $\min_{\adversary_1} \Prob_{\rmdp_1}^{\policy_1, \adversary_1}\left( \{\labelfunc_1(x_k)\}_{k \in \NN} \!\in\!\varphi \right)$ cannot be smaller than $\min_{\adversary_2} \Prob_{\rmdp_2}^{\policy_2, \adversary_2}\left( \{\labelfunc_2(x_k)\}_{k \in \NN} \!\in\! \varphi \right)$, and thus, \cref{eq:PASR_synthesis} follows.

    What remains is to show that \cref{eq:thm_proof1} holds.
    In fact, \cref{eq:thm_proof1} follows from \cref{lemma:observations}: Given related states $(x_1, x_2)$, the distributions over the next observations coincide.
    Moreover, the next states remain related, so subsequent distributions over observations also coincide.
    Thus, \cref{thm:PASR_synthesis} follows.
\end{proof}

\begin{remark}
    If the interface function in \cref{def:interface} is given in explicit form, then \cref{thm:PASR_synthesis} reduces to a look-up step and is thus tractable.
    Yet, computing this interface can be challenging, especially for general nonlinear dynamics.
\end{remark}

\subsection{Discussion}
In this paper, we defined specifications for (R)MDPs as \emph{sets of labelling trajectories}, that is, $\varphi \subset 2^\Labels \times 2^\Labels \times \cdots$.
A common example of such a specification is the (infinite-horizon) \emph{reach-avoid} specification, which is satisfied if the system reaches the goal states $X_G \subset \States$ while avoiding the unsafe states $X_U \subset \States$.
Let $\Labels = \{ \mathsf{G}, \mathsf{U} \}$ and define the labelling function $\labelfunc \colon \States \to 2^\Labels$ for all $x \in \States$ as
$
   x \in X_G \iff \mathsf{G} \in \labelfunc(x),
    \,\text{and}\,
   x \in X_U \iff \mathsf{U} \in \labelfunc(x).
$
The corresponding reach-avoid specification $\varphi_\text{rwa} \subset 2^\Labels \times 2^\Labels \times \cdots$ is defined as
\begin{align*}\label{equ:RwA}
    \varphi_\text{rwa} \coloneqq \big\{ 
   (\labelfunc(x_0), \labelfunc(x_1), \ldots) : {}&{}\exists k \in \NN, \, \mathsf{G} \in \labelfunc(x_k) \, \wedge
    \\
    &\forall k'\leq k, \, \mathsf{U} \notin \labelfunc(x_{k'})
    \big\}.
\end{align*}
In practice, it is often convenient to express specifications in temporal logic, such as LTL and PCTL; however, we omit further details and refer to~\cite{BaierKatoen08} for a textbook introduction.

Several papers construct RMDP or IMDP abstractions of stochastic dynamical systems~\cite{DBLP:journals/jair/BadingsRAPPSJ23,DBLP:journals/tac/LahijanianAB15,DBLP:conf/hybrid/CauchiLLAKC19,DBLP:conf/l4dc/GraciaBLL24,DBLP:journals/corr/abs-2404-08344}.
Often, the correctness of such approaches implicitly relies on establishing a PASR from the abstraction to the concrete system.
For example,~\cite{DBLP:conf/aaai/BadingsRA023} studies abstraction-based control of stochastic dynamical systems with set-bounded uncertain parameters.
Their setting is a special case of ours, where the concrete model is a continuous-state/action RMDP as per \cref{def:RMDP}, and where the abstract model is a finite-state interval MDP (IMDP), which is an RMDP where the transition probabilities are defined as intervals.
Our probabilistic alternating simulation relation makes the analysis of~\cite{DBLP:conf/aaai/BadingsRA023} more explicit and thus contributes to a better formalisation of abstraction-based controller synthesis techniques.
Finally, PASR can also be used for state space reduction in finite RMDPs.

\section{Numerical experiment}
\label{sec:experiments}
We demonstrate the applicability of our techniques to synthesise a finite-state interval MDP (IMDP) abstraction for the 4D-state Dubins vehicle with uncertain parameters from \cref{example}.
The experiments ran on an Apple MacBook with an M4 Pro chip and 24GB of RAM.
Our Python code is available via \url{https://github.com/LAVA-LAB/dynabs-jax} and uses JAX %
for just-in-time (JIT) compilation.

\textbf{Dynamics.}
We consider the dynamics from \cref{example} with a time discretisation of $\delta = 0.5$.
We set the true parameters to $\alpha^\star = 0.85$, $\beta^\star = 0.85$.
The goal is to synthesise a policy that maximises the probability to satisfy the reach-avoid specification in \cref{fig:Dubins} (goal states $X_G$ in green; unsafe states $X_U$ in red; only position variables $(x,y)$ shown).
We constrain the vehicle's speed to $V_k \in [-3, 3]$, the steering input to $u_k \in [-0.5\pi, 0.5\pi]$, and the acceleration input to $u'_k \in [-5, 5]$.

\textbf{Abstraction.}
We follow a relatively standard approach to constructing the IMDP abstraction, similar to, e.g.,~\cite{DBLP:conf/hybrid/CauchiLLAKC19,DBLP:conf/hybrid/MathiesenHL25,DBLP:journals/tac/LahijanianAB15}.
We refer to the concrete model as $\rmdp_1$ and to the IMDP as $\rmdp_2$.
We partition the state space into $40 \times 20 \times 20 \times 20 = 320\,000$ states and uniformly grid the input space into $7 \times 7$ actions. %
As in \cref{example3}, we model the uncertain parameters $\alpha$ and $\beta$ using the IMDP's disturbances $\Disturb_2$ .
We compute the probability intervals of the IMDP by adapting the approach from~\cite{DBLP:journals/tac/LahijanianAB15} to uncertain parameters.
Intuitively, the probability $\transfuncR_2(x'_2 \mid x_2, u_2, v_2)$ of reaching a state $x'_2 \in \States_2$ by executing the action $u_2 \in \Actions_2$ in state $x_2 \in \States$ is obtained by integrating the kernel $\transfuncR_1$ of the concrete model over the associated concrete states $\Relinv{x'_2} \subset \States_1$ and taking the min/max over the disturbances $v_2 \in \Disturb$ (representing all possible values $\alpha$ and $\beta$).
As the process noise is additive and Gaussian, we can efficiently compute these probability intervals.
We use robust value iteration implemented in the model checker Storm~\cite{DBLP:conf/cav/DehnertJK017}, to compute an optimal policy on the IMDP abstraction.

\begin{figure}[t]
\begin{subfigure}[b]{0.495\linewidth}
    \centering
    \includegraphics[width=\linewidth]{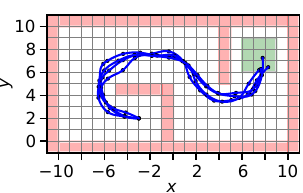}
    \caption{Case (1): known $\alpha$ and $\beta$.}
    \label{fig:Dubins_case0}
\end{subfigure}\hfill
\begin{subfigure}[b]{0.495\linewidth}
    \centering
    \includegraphics[width=\linewidth]{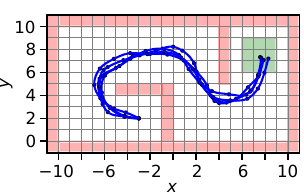}
    \caption{Case (2): uncertain $\alpha$ and $\beta$.}
    \label{fig:Dubins_case2}
\end{subfigure}
    \caption{Simulations of the 4D-state Dubins vehicle under the policies synthesised using \cref{thm:PASR_synthesis}. Even though the parameter uncertainty increases the number of transitions, the performance of the resulting policy is practically unaffected.}
    \label{fig:Dubins}
\end{figure}

\textbf{Cases.}
We compare two cases: (1) the parameters $\alpha$ and $\beta$ are precisely known, and (2) the parameters are only known up to $\alpha \in [0.8, 0.9]$ and $\beta \in [0.8, 0.9]$.
For both cases, we construct the abstract IMDP $\rmdp_2$ described above and use \cref{thm:PASR_synthesis} to refine an (optimal) IMDP policy $\policy_2$ into a policy $\policy_1$ for the Dubins vehicle $\rmdp_1$ together with a lower bound on the probability of satisfying the reach-avoid specification (which is obtained as the right-hand side of \cref{eq:PASR_synthesis}).

\textbf{Results.}
Without parameter uncertainty, generating the IMDP takes around $\SI{9}{\minute}$, and computing an optimal IMDP policy $\SI{5}{\minute}$.
With uncertainty, generating the IMDP and computing an optimal IMDP policy takes around $16$ and $\SI{8}{\minute}$, respectively.
For both cases, the IMDPs have $320\,000$ states, but adding parameter uncertainty increases the number of transitions (i.e., the number of edges in the underlying graph of the IMDP) from 205 million to 354 million.
Indeed, the uncertain parameters lead to additional transitions between states that must be modelled in the IMDP.
We also tested a third case with even more uncertainty (where $\alpha \in [0.7, 1.0]$ and $\beta \in [0.7, 1.0]$); however, this led to a vacuous IMDP abstraction with too much conservatism in the transitions.

Without parameter uncertainty, the bound on the satisfaction probability obtained using \cref{thm:PASR_synthesis} is $\rho^\star = \min_{\adversary_2} \,\Prob_{\rmdp_2}^{\policy_2, \adversary_2}\left( \{\labelfunc_2({x_2}_k)\}_{k \in \NN} \in \varphi \right) = 0.996$.
With parameter uncertainty, we obtain a (negligibly lower) bound of $\rho^\star = 0.995$.
To validate these bounds, we run $10\,000$ simulations of the concrete model under the synthesised policies and the true parameters $\alpha^\star$ and $\beta^\star$.
Four state trajectories under the policies for both cases are shown in \cref{fig:Dubins}.
Interestingly, the trajectories for both cases are almost identical.
We believe this is because the parameter uncertainty only directly affects the speed ($V_k$) and steering angle ($\theta_k$) variable, but \cref{fig:Dubins} only shows the position ($x_k$ and $y_k$).
For both cases, all simulated trajectories satisfy the reach-avoid specification, showing that the theoretical bounds are indeed achieved in practice.

\section{Conclusion}
We presented a notion of probabilistic alternating simulation between robust MDPs (RMDPs) with continuous state and action spaces.
Such continuous RMDPs are useful to model systems with both stochastic and nondeterministic (i.e., set-valued) dynamics.
We showed how to use probabilistic alternating simulation relations (PASR) to synthesise policies that provably satisfy complex specifications.
We demonstrated the applicability of our techniques on a reach-avoid problem for a 4D-state Dubins vehicle with uncertain parameters.

In the future, we aim to apply our techniques for model order reduction by using a PASR to relate two continuous RMDPs.
We also plan to study approximate versions of PASR to enable solving more challenging control problems, similar to the approximate probabilistic simulation developed by, e.g.,~\cite{DBLP:journals/siamco/HaesaertSA17}.
Finally, we wish to more explicitly connect our results to the relations for continuous stochastic games in~\cite{DBLP:journals/automatica/ZhongLZC23}.

\bibliographystyle{ieeetr}
\bibliography{references.bib}

\end{document}